\newtheorem{lemma}{Lemma}
\newtheorem{theorem}{Theorem}
\newcommand{\R}{\ensuremath{\mathbb{R}}}
\newcommand{\E}{\ensuremath{\mathbb{E}}}
\newcommand{\dis}{\text{dis}}
\newcommand{\agg}{\text{agg}}
\newcommand{\com}{\text{com}}
\def\b1{\boldsymbol{1}}
\newenvironment{figurenotes}[1][Note]{\begin{minipage}[t]{\linewidth}\footnotesize{\itshape#1: }}{\end{minipage}}
\begin{document}
 
\title{Temporal Aggregation for the Synthetic Control Method\thanks{Prepared for 2024 AEA P\&P session ``Treatment Effects: Theory and Implementation".}}
 
\author{Liyang Sun, Eli Ben-Michael, and Avi Feller\thanks{Sun: Department of Economics, University College London and CEMFI, Email: liyang.sun@ucl.ac.uk. 
Ben-Michael: Department of Statistics \& Data Science and Heinz College of Information Systems \& Public Policy, Carnegie Mellon University. Feller: Goldman School of Public Policy \& Department of Statistics, University of California, Berkeley. We thank Elizabeth Stuart for sharing data for the application. AF and LS gratefully acknowledge support from the Institute of Education Sciences, U.S. Department of Education, through
Grant R305D200010. LS also acknowledges support from Grant PID2022-143184NA-I00 funded by MCIN/AEI. See the associated \texttt{augsynth} library for \texttt{R} and \protect\url{https://doi.org/10.5281/zenodo.10848594} for the replication package.}}

\maketitle
\begin{abstract}
The synthetic control method (SCM) is a popular approach for estimating the impact of a treatment on a single unit with panel data. Two challenges arise with higher frequency data (e.g., monthly versus yearly): (1) achieving excellent pre-treatment fit is typically more challenging; and (2) overfitting to noise is more likely. Aggregating data over time can mitigate these problems but can also destroy important signal. In this paper, we bound the bias for SCM with disaggregated and aggregated outcomes and give conditions under which aggregating tightens the bounds. We then propose finding weights that balance both disaggregated and aggregated series.
\end{abstract}

\section{Introduction}
Empirical researchers often use the synthetic control method (SCM) to estimate the impact of a treatment on a single unit in panel data settings \citep{AbadieAlbertoDiamond2010}. The ``synthetic control'' is a weighted average of control units that balances the treated unit’s pre-treatment outcomes as closely as possible.

Two challenges arise when using SCM with higher frequency data, such as when the outcome is measured every month versus every year. First, because there are more pre-treatment outcomes to balance, achieving excellent pre-treatment fit is typically more challenging. Second, even when excellent pre-treatment fit is possible, higher-frequency observations raise the possibility of bias due to overfitting to noise. A recent review by \citet{abadie2022synthetic} explicitly cautions about such bias from using disaggregated outcomes in SCM. Instead, researchers can first aggregate the outcome  series into lower-frequency (e.g., annual) observations, and then estimate SCM weights that minimize the imbalance in these aggregated pre-treatment outcomes. Doing so mechanically improves pre-treatment fit as well.

In this paper, we propose a framework for temporal aggregation for SCM. Adapting recent results from \citet{sun2023using}, we first derive finite-sample bounds on the bias for SCM under a linear factor model when using temporally disaggregated versus aggregated outcome series.
With these bounds, we then show that temporal aggregation 
only reduces bias to the extent that doing so reduces noise without also overly reducing signal for the underlying factors.
The optimal trade-off between the two approaches depends on unknown parameters. We argue, however, that finding synthetic control weights that jointly balance both the disaggregated and aggregate series is a promising compromise approach for practice.

Our setup builds on an expansive literature on SCM, especially recent papers that modify SCM to mitigate bias both due to imperfect pre-treatment balance \citep[e.g.,][]{ferman2018revisiting, benmichael2021_ascm}
and bias due to overfitting to noise \citep[e.g.,][]{kellogg2021combining}. Most directly relevant, \citet{sun2023using} discuss using use multiple outcomes to mitigate both sources of bias.

\section{Setup}

For each unit $i=1,\dots,N$ and at each lower-frequency time interval $t=1,\dots,T$,  we observe $K$ higher-frequency observations of the outcome denoted as $Y_{itk}$ where $k=1,\dots,K$. For instance, in a monthly series $t=1,\dots,T$ could represent years and $k=1,2,\dots,12$ the months within each year. The choice of time interval $t$ reflects how we compute temporal aggregates, such as aggregating monthly data into yearly averages. We maintain fixed values for $N$ and $K$ throughout the analysis.

We denote the exposure to a binary treatment by $W_{i}\in\{0,1\}$. We restrict our attention to the case where a single unit receives treatment, and follow the convention that this is the first one, $W_1 = 1$. The remaining $N_0 \equiv N-1$ units are possible controls, often referred to as ``donor units.'' To simplify notation, we limit to one post-treatment period, $T = T_0 + 1$, though our results easily extend to larger $T$.

We  denote the potential outcome under treatment $w$ with $Y_{itk}(w)$. We are interested in the treatment effects for the treated unit during the $K$ higher frequency observations during post-treatment period $T$: $\tau_{Tk} = Y_{1Tk}(1) - Y_{1Tk}(0)$ for $k=1,\dots,K$.  Since we directly observe $Y_{1Tk}(1) = Y_{1Tk}$ for the treated unit, we focus on imputing the missing counterfactual outcome under control, $Y_{1Tk}(0)$.

Throughout, we will focus on \emph{de-meaned} or \emph{intercept-shifted} weighting estimators \citep{Doudchenko2017, ferman2018revisiting}.
We denote $\bar Y_{i\cdot \cdot} \equiv \frac{1}{T_0K}\sum_{t=1}^{T_0}\sum_{j=1}^{K} Y_{itj}$ as the pre-treatment average for the outcome for unit $i$, and $\dot Y_{itk} = Y_{itk} - \bar{Y}_{i\cdot \cdot}$ as the corresponding de-meaned outcome.
We consider estimators of the form: $\widehat{Y}_{1Tk}(0) \equiv \bar{Y}_{1\cdot \cdot} + \sum_{i=2}^N \gamma_i \dot{Y}_{iTk},$ 
where $\gamma \in \mathcal C \subset \R^{N-1}$ is a set of weights. Our paper centers on how to choose the weights $\gamma$ from a set $\mathcal{C}=\{\gamma\in\mathbb{R}^{N-1}\mid\|\gamma\|_1\leq C,\sum_{i}\gamma_{i}=1\}$ for a known $C$.

The first approach we consider is finding a synthetic control that has the best pre-treatment fit on the (de-meaned) disaggregated high-frequency outcomes $q^{\dis}(\cdot)$.
We refer to this set of weights 
as the \emph{disaggregated weights} $\hat{\gamma}^{\dis}$:
\[
 \underset{\gamma\in  \mathcal{C}}{\min}  \frac{1}{T_{0}}\frac{1}{K}\sum_{k=1}^{K}\sum_{t=1}^{T_{0}}\left(\dot Y_{1tk} -\sum_{W_{i}=0}\gamma_{i} \dot Y_{itk}\right)^{2}.
\]

An alternative choice is to optimize the \emph{aggregated objective} $q^{\agg}(\cdot)$, the pre-treatment fit for the temporally aggregated outcomes via averaging. We refer to the set of weights that minimize this objective as the \emph{aggregated} weights $\hat{\gamma}^{\agg}$:
\[
\underset{\gamma\in  \mathcal{C}}{\min} \frac{1}{T_{0}}\sum_{t=1}^{T_{0}}\left(\frac{1}{K}\sum_{k=1}^{K}\dot Y_{1tk} -\sum_{W_{i}=0}\gamma_{i}\dot Y_{itk}\right)^{2}.
\]

\section{Bias bounds}
To derive finite sample bias bounds for $Y_{1Tk}(0)  - \widehat{Y}_{1Tk}(0)$ for each $k=1,\dots,K$, we assume that the outcomes under control are generated as
$Y_{itk}(0) = \alpha_{i} + 
\beta_{tk} + L_{itk} + \varepsilon_{itk}$, with $\sum_{t=1}^T \beta_{tk} = 0$ for all $k$; see, for example, \citet{athey_matrix_2021}.
After incorporating the additive two-way fixed effects, the model component retains a term $L_{itk}$ with $\sum_{i=1}^N L_{itk} = 0$ for all $t, k$ and $\sum_{t=1}^T L_{itk} = 0$ for all $i, k$.
We assume the idiosyncratic errors $\varepsilon_{itk}$  are mean zero sub-Gaussian random variables with scale parameter $\sigma$, independent of the treatment status
$W_{i}$. We also assume independence across units and time, which is plausible if the model components capture co-movement in the outcome.

Let the matrix $L\in\mathbb{R}^{N\times(TK)}$ contain  $L_{itk}$ 
 for the treated unit and the remaining rows correspond to control units. \cite{sun2023using} show that  a low rank condition $rank(L)<N-1$ is necessary for there to exist  oracle weights $\gamma^\ast$ over donor units that yield an unbiased estimate for the control potential outcome: $ \E_{\varepsilon_{Tk}}\left[Y_{1Tk}(0)  - \widehat{Y}_{1Tk}(0) \right] = L_{1Tk} - \sum_{i = 2}^N \gamma^\ast_i L_{iTk} = 0. $
Here the expectation is taken over the idiosyncratic errors in the respective post-treatment periods. 

The deterministic model component can be written as a linear factor model, 
$L_{itk}=\boldsymbol{\phi}_{i}\cdot\boldsymbol{\mu}_{tk},$ 
with $r=rank(L)$, where $\boldsymbol{\mu}_{tk}\in\mathbb{R}^{r}$ are latent time factors and each unit has a vector of time invariant factor loadings
$\boldsymbol{\phi}_{i}\in\mathbb{R}^{r}$.  
Two important quantities for our discussion are $\underbar{\ensuremath{\xi}}^{\dis}$ and $\underbar{\ensuremath{\xi}}^{\agg}$, the smallest singular values of the variance-covariance matrix of, respectively, the time factors $\mu_{tk}$ and the averaged time factors $\bar{\mu}_{t}=\frac{1}{K}\sum_{k=1}^{K}\mu_{tk}$.
Following previous literature we assume that $\underbar{\ensuremath{\xi}}^{\dis} > 0$, which avoids issues of weak identification \citep{AbadieAlbertoDiamond2010}.

 For estimated weights $\hat{\gamma}$ based on pre-treatment fit, the bias for the effect in period $Tk$ is due to inadequate balance in the model components:
$
    Bias(\hat\gamma)=L_{1Tk}- \sum_{i=2}^N \hat{\gamma}_i L_{iTk} .
$
We can decompose the bias from estimated weights $Bias(\hat\gamma)$
into two terms using the linear factor model:
\begin{align*}
  & \sum_{t=1}^{T_{0}}\sum_{j=1}^{K}{\omega}_{tj}\left(\dot{Y}_{1tj}-\sum_{W_{i}=0}\hat{\gamma}_{i}\dot{Y}_{itj}\right)\ \ (\text{imbalance})\\
 & -\sum_{t=1}^{T_{0}}\sum_{j=1}^{K}{\omega}_{tj}\left(\dot{\varepsilon}_{1tj}-\sum_{W_{i}=0}\hat{\gamma}_{i}\dot{\varepsilon}_{itj}\right)\ \ (\text{overfitting})
\end{align*}
where  $\omega_{tj}$ are transformations of the factor values that depend on the estimator.

The first term is bias due to imperfect pre-treatment fit (or imbalance) in the pre-treatment outcomes, $\dot{Y}_{itj}$.
The second term is bias due to overfitting to noise, also known as the approximation error.
This term arises because the optimization problems minimize imbalance in \emph{observed} pre-treatment outcomes --- noisy realizations of latent factors --- rather than minimizing imbalance in the latent factors themselves. 

Theorem 1 in the appendix formally states high-probability bounds on the bias terms, which we obtain using results from \cite{sun2023using}. 
These bounds hold in finite samples and account for imperfect pre-treatment fit. The leading terms in the bias due to imbalance and overfitting are:
\begin{align*}
  |Bias(\hat\gamma^{\dis})| = & O\left(\frac{1}{\underbar{\ensuremath{\xi}}^{\dis}}\right) +
  O\left(\frac{1} {\underbar{\ensuremath{\xi}}^{\dis}\sqrt{T_0K}}\right)\\
 |Bias(\hat\gamma^{\agg})| = & O\left(\frac{1} {\underbar{\ensuremath{\xi}}^{\agg}\sqrt K}\right) +
  O\left(\frac{1} {\underbar{\ensuremath{\xi}}^{\agg}\sqrt{T_0K}}\right)
\end{align*}
where $\underbar{\ensuremath{\xi}}^{\dis}$ and $\underbar{\ensuremath{\xi}}^{\agg}$ are the relevant smallest singular values for the disaggregated and aggregated series.

The first term, bias due to imbalance, does not vanish for either approach.
Consistent with \citet{ferman2018revisiting}, bias remains because the SCM objective does not converge to the objective minimized by the oracle weights due to noise in the outcomes.
Aggregation, however, reduces the level of noise in the objective and therefore reduces the bias by a factor of $1/\sqrt{K}$.

The second term in the bias is the contribution of overfitting to noise.
This decreases in the total number of pre-treatment observations $T_0 K$ for \emph{both} the disaggregated and aggregated estimators.
For the disaggregated weights this follows prior results \citep[e.g.,][]{AbadieAlbertoDiamond2010}.
Aggregation reduces the noise in the objective, but this is counteracted by a commensurate reduction in the number of pre-treatment observations, netting out to the same risk of bias due to overfitting as with the disaggregated weights.

In practice, the most important consideration in temporal aggregation is whether doing so eliminates the signal for the underlying factors.
In particular, if (ignoring constant terms) $\sqrt{K} \underline{\xi}^{\text{agg}} > \underline{\xi}^{\text{dis}}$, aggregation will lead to tighter bias bounds.
There are many scenarios where we expect long-run variation to persist, especially after seasonally adjusting the outcomes, and therefore believe this condition might hold.
However, if there is not substantial long-run variation and  $\underbar{\ensuremath{\xi}}^{\agg}$ is very small, aggregating can 
leave little behind to learn about the latent factor loadings, and possibly inflate the bias bound. 
Similar challenges arise in time series model estimation, where aggregation can lead to biased estimates for the true time series models \citep[see][among others]{marcet2019temporal}.

Rather than choose between these two extremes, we propose finding SCM weights that control a linear combination of the two objectives, with weight $\nu$ on the aggregated objective and weight $(1-\nu)$ on the disaggregated objective.
This creates 
an imbalance ``frontier", similar to the approach in \cite{benmichael2022_stag},
where $\nu = 0$ corresponds to the disaggregated objective and $\nu = 1$ to the aggregated objective.
As formalized in Lemma 1 in the appendix, if the SCM weights yield excellent pre-treatment fit on \emph{both} the disaggregated and aggregated outcomes, these weights will also achieve the minimum of the two bounds.
In general, finding the optimal $\nu^\ast$ involves model-derived parameters and is infeasible.
As a heuristic, we propose equal weight to the disaggregated and aggregated fits, $\nu = 0.5$, and then assessing sensitivity to this choice.

There are many directions for future work that incorporate recent innovations in panel data methods, including first de-noising \citep[e.g.,][]{amjad_robust_2018} or seasonally adjusting the disaggregated outcome series. We could also explore choosing an optimal level of temporal aggregation for a single SCM objective. Finally, questions about temporal aggregation also arise in event study and other panel data models, suggesting further avenues for fruitful research.

\section{Texas 2021 Abortion Restrictions}\label{sec:application}

We revisit the \citet{bell_texas_2023} study on the SB8, a 2021 Texas law  restricting abortion. Using monthly state-level live birth counts from 2016 to 2022, the authors  construct a synthetic Texas to estimate SB8's impact on monthly births.

\begin{figure}[!ht]
\begin{center}
\includegraphics[width=0.5\textwidth]{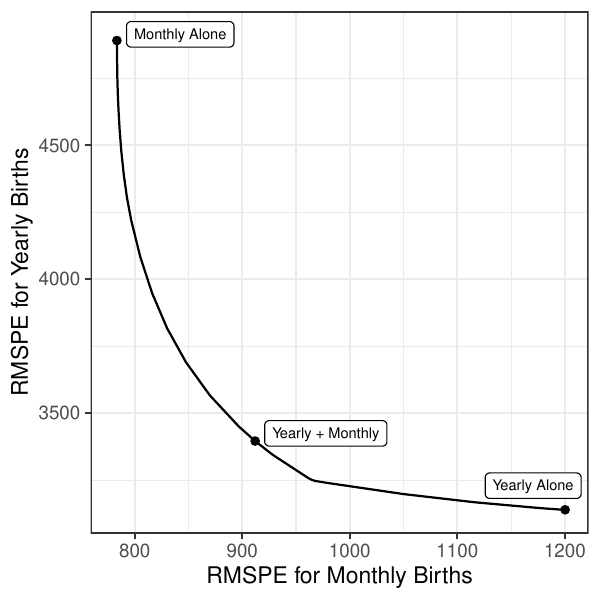}
\end{center}
\caption{Imbalance ``frontier"}\label{fig:imbalance}
\begin{figurenotes}
     The pre-treatment fit for births aggregated to the yearly level (y-axis) against the pre-treatment fit for births at the monthly level (x-axis) as we vary the relative weight placed on balancing average yearly births from 0 (top left corner) to 1 (bottom right corner). ``Yearly + Monthly'' denotes the fit when equal weight is placed on both the aggregated and disaggregated series.
\end{figurenotes}
\end{figure}

The original analysis uses SCM on monthly data ($\nu = 0$). We explore aggregating to yearly averages ($\nu=1$) and intermediate values of $\nu$ as illustrated in Figure~\ref{fig:imbalance}. Combining yearly and monthly births equally ($\nu=0.5$, labeled ``Yearly + Monthly" in Figure~\ref{fig:imbalance}) achieves substantial balance on both fronts, reducing potential bias. Appendix Figure 1 indicates a slightly larger estimated effect for post-treatment monthly births using the combined approach ($\nu = 0.5$) than the original analysis. Appendix Figure 2 demonstrates stable estimates across a wide range of $\nu$.

\newpage
\bibliographystyle{aea}
\bibliography{BibFile}
\newpage
\appendix
\setcounter{figure}{0}
\renewcommand{\thefigure}{A\arabic{figure}}

\begin{center}
    Online Appendix to the paper\\
Temporal Aggregation for the Synthetic Control Method\\
Liyang Sun, Eli Ben-Michael, and Avi Feller
\end{center}

\begin{theorem}
\label{thm:error_bounds} In addition to assumptions stated above, suppose some oracle weights exist in the set $\mathcal{C}$. Denote $\boldsymbol{\mu}_{tk}\in\mathbb{R}^{r}$ as the time  factors  and assume that they are bounded above by $M$.
    Furthermore, denoting $\sigma_\text{min}(A)$ as the smallest singular value of a matrix $A$, assume that (i)  $\sigma_{min}\left(\frac{1}{T_{0}K}\sum_{tk}\mu_{tk}\mu_{tk}'\right) \geq \underbar{\ensuremath{\xi}}^{\dis} > 0$; and (ii) $\sigma_{min}\left(\frac{1}{T_{0}}\sum_{t}\left(\bar{\mu}_{t}\right)\left(\bar{\mu}_{t}\right)'\right) \geq \underbar{\ensuremath{\xi}}^{\agg} > 0$ where $\bar{\mu}_{t}=\frac{1}{K}\sum_{k=1}^{K}\mu_{tk}$.  For any $\delta>0$, let $\tilde \sigma=\left( 2C\sqrt{\log2N_{0}} + (1+C)\delta\right)(1+1/\sqrt{T_0K})\sigma$,  with probability at least $1-8\exp\left(-\frac{\delta^{2}}{2}\right)-4\exp\left(-\frac{T_0K\delta^2}{2\sigma^2(1+C^2)}\right)$, the absolute bias satisfies the bound, 
 
\begin{align}
\left| \text{Bias}(\hat{\gamma}^{\dis}) \right| & \leq \frac{rM^{2}}{\underline{\xi}^{\dis}} \left( 4(1+C)\sigma + 2\delta + \frac{\tilde{\sigma}}{\sqrt{T_{0}K}} \right), \label{eq:dis_bias} \\
\left| \text{Bias}(\hat{\gamma}^{\agg}) \right| & \leq \frac{rM^{2}}{\underline{\xi}^{\agg}} \left( \frac{4(1+C)\sigma}{\sqrt{K}} + 2\delta + \frac{\tilde{\sigma}}{\sqrt{T_{0}K}} \right). \label{eq:agg_bias}
\end{align}

\end{theorem}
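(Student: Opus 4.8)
The plan is to follow \citet{sun2023using}: express the bias through the linear factor model, convert the pre-treatment fit achieved by each program into a bound on the factor-loading imbalance, and then handle the imbalance and overfitting pieces of the decomposition separately. Write $\boldsymbol{\Delta}(\gamma)=\boldsymbol{\phi}_{1}-\sum_{W_{i}=0}\gamma_{i}\boldsymbol{\phi}_{i}$, so that $\text{Bias}(\hat\gamma)=\boldsymbol{\mu}_{Tk}\cdot\boldsymbol{\Delta}(\hat\gamma)$. Because some oracle weights $\gamma^\ast\in\mathcal C$ exist, they balance the loadings exactly, $\boldsymbol{\phi}_{1}=\sum_{W_{i}=0}\gamma^\ast_{i}\boldsymbol{\phi}_{i}$, hence balance $L$ in every period; since $\gamma^\ast$ is feasible for both programs, $q^{\dis}(\hat\gamma^{\dis})\le q^{\dis}(\gamma^\ast)$ and $q^{\agg}(\hat\gamma^{\agg})\le q^{\agg}(\gamma^\ast)$, and both right-hand sides involve only the idiosyncratic errors. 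For the disaggregated program, solving the least-squares normal equations gives $\boldsymbol{\Delta}(\hat\gamma^{\dis})=(\Sigma^{\dis})^{-1}\frac{1}{T_{0}K}\sum_{t\le T_{0},j}(\boldsymbol{\mu}_{tj}-\bar{\boldsymbol{\mu}})(\dot L_{1tj}-\sum_{W_{i}=0}\hat\gamma^{\dis}_{i}\dot L_{itj})$, with $\Sigma^{\dis}=\frac{1}{T_{0}K}\sum_{t\le T_{0},j}(\boldsymbol{\mu}_{tj}-\bar{\boldsymbol{\mu}})(\boldsymbol{\mu}_{tj}-\bar{\boldsymbol{\mu}})'$ and $\bar{\boldsymbol{\mu}}$ the pre-treatment factor average; this is exactly the imbalance/overfitting decomposition of the text, with the \emph{deterministic} weights $\omega_{tj}=\frac{1}{T_{0}K}\boldsymbol{\mu}_{Tk}'(\Sigma^{\dis})^{-1}(\boldsymbol{\mu}_{tj}-\bar{\boldsymbol{\mu}})$. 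The aggregated case is identical after replacing $\boldsymbol{\mu}_{tj}$ by $\bar{\boldsymbol{\mu}}_{t}$, summing over $t$ only, and dividing by $T_{0}$; assumptions (i)/(ii) give $\Sigma^{\dis}\succeq\underline{\xi}^{\dis}I$ and $\Sigma^{\agg}\succeq\underline{\xi}^{\agg}I$ after absorbing an $O(1/T_{0})$ de-meaning correction, using $\|\bar{\boldsymbol{\mu}}\|\le\sqrt rM/T_{0}$ (which follows from $\sum_{t=1}^{T}L_{itk}=0$ and the loadings spanning $\mathbb{R}^{r}$).

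For the \textbf{imbalance} term, interpret the bound $M$ componentwise, so $\|\boldsymbol{\mu}_{Tk}\|\le\sqrt rM$ and $\|\boldsymbol{\mu}_{tj}-\bar{\boldsymbol{\mu}}\|\le 2\sqrt rM$; with $\|(\Sigma^{\dis})^{-1}\|_{\text{op}}\le 1/\underline{\xi}^{\dis}$ this gives $\max_{t,j}|\omega_{tj}|\le 2rM^{2}/(T_{0}K\,\underline{\xi}^{\dis})$. Bounding the inner product by $\max_{t,j}|\omega_{tj}|\cdot\sum_{t,j}|\cdot|$, applying Cauchy--Schwarz to the sum over the $T_{0}K$ residuals, and using $\sum_{t\le T_{0},j}(\cdot)^{2}=T_{0}K\,q^{\dis}(\hat\gamma^{\dis})$, we get
\[
\Big|\sum_{t\le T_{0},j}\omega_{tj}\big(\dot Y_{1tj}-\textstyle\sum_{W_{i}=0}\hat\gamma^{\dis}_{i}\dot Y_{itj}\big)\Big|\ \le\ \frac{2rM^{2}}{\underline{\xi}^{\dis}}\sqrt{q^{\dis}(\hat\gamma^{\dis})}\ \le\ \frac{2rM^{2}}{\underline{\xi}^{\dis}}\sqrt{q^{\dis}(\gamma^\ast)}.
\]
A sub-Gaussian chi-square concentration bound for the average of squared residuals $\dot\varepsilon_{1tj}-\sum_{W_{i}=0}\gamma^\ast_{i}\dot\varepsilon_{itj}$ (mean zero, independent across $t,j$, sub-Gaussian parameter at most $(1+C)\sigma(1+1/\sqrt{T_{0}K})$ and per-term variance at most $\sigma^{2}(1+C^{2})$, using $\|\gamma^\ast\|_{2}\le\|\gamma^\ast\|_{1}\le C$) gives $\sqrt{q^{\dis}(\gamma^\ast)}\le 2(1+C)\sigma+\delta$ with probability at least $1-2e^{-T_{0}K\delta^{2}/(2\sigma^{2}(1+C^{2}))}$, which is the first term of \eqref{eq:dis_bias}. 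For the aggregated program the residuals are temporal averages, whose sub-Gaussian parameter shrinks by $1/\sqrt K$, producing the $4(1+C)\sigma/\sqrt K$ factor in \eqref{eq:agg_bias}.

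For the \textbf{overfitting} term, because $\omega$ is deterministic, $\sum_{t\le T_{0},j}\omega_{tj}(\dot\varepsilon_{1tj}-\sum_{W_{i}=0}\hat\gamma^{\dis}_{i}\dot\varepsilon_{itj})=Z_{1}-\sum_{W_{i}=0}\hat\gamma^{\dis}_{i}Z_{i}$ with $Z_{i}=\sum_{t\le T_{0},j}\omega_{tj}\dot\varepsilon_{itj}$; bounding $|Z_{1}|$ separately and $|\sum_{W_{i}=0}\hat\gamma^{\dis}_{i}Z_{i}|\le C\max_{W_{i}=0}|Z_{i}|$ collapses the supremum over the rescaled simplex $\mathcal C$ to a maximum of sub-Gaussian variables over the $N_{0}$ donors. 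Each $Z_{i}$ has sub-Gaussian parameter $\le\|\omega\|_{2}\sigma\le\sqrt{T_{0}K}\max_{t,j}|\omega_{tj}|\,\sigma\le\frac{2rM^{2}}{\underline{\xi}^{\dis}}\frac{\sigma}{\sqrt{T_{0}K}}$, so a maximal inequality (with the $1+1/\sqrt{T_{0}K}$ de-meaning correction folded in) gives $\max_{i}|Z_{i}|\le\frac{rM^{2}}{\underline{\xi}^{\dis}}\frac{\tilde\sigma}{\sqrt{T_{0}K}}$ on an event of probability at least $1-c\,e^{-\delta^{2}/2}$, with the $2C\sqrt{\log2N_{0}}$ in $\tilde\sigma$ coming from the union bound over the $2N_{0}$ signed donor variables and the $(1+C)\delta$ from the tail. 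Multiplying by $(1+C)$ delivers the $\tilde\sigma/\sqrt{T_{0}K}$ term of \eqref{eq:dis_bias}. In the aggregated case the averaged errors have variance smaller by $1/K$, but there are only $T_{0}$ summands and $\max_{t}|\omega^{\agg}_{t}|\le 2rM^{2}/(T_{0}\underline{\xi}^{\agg})$, so the sub-Gaussian parameter of the analogous $\bar Z_{i}$ is again of order $\frac{rM^{2}}{\underline{\xi}^{\agg}}\frac{\sigma}{\sqrt{T_{0}K}}$ --- the two effects cancel --- which is why the same $\tilde\sigma/\sqrt{T_{0}K}$ term appears in \eqref{eq:agg_bias}. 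Summing the imbalance and overfitting bounds and union-bounding the chi-square events for $q^{\dis}(\gamma^\ast),q^{\agg}(\gamma^\ast)$ and the maximal-inequality events for the two overfitting terms yields the stated probability $1-8e^{-\delta^{2}/2}-4e^{-T_{0}K\delta^{2}/(2\sigma^{2}(1+C^{2}))}$.

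The main obstacles, beyond the bookkeeping, are two. First, reconciling the intercept shift: everything is stated in terms of $\dot Y_{itj}$, so one must carry the $\dot\varepsilon$ and $\dot L$ terms and, in particular, show that assumptions (i)/(ii) still lower-bound the centered pre-treatment Gram matrices $\Sigma^{\dis},\Sigma^{\agg}$; this is where $\sum_{t=1}^{T}L_{itk}=0$ and the uniform bound $M$ are used, and it is the source of the $(1+1/\sqrt{T_{0}K})$ factors. Second, making the overfitting control uniform over $\mathcal C$: the reduction from a supremum over the rescaled simplex to a maximum over $2N_{0}$ sub-Gaussian variables is what produces the $C\sqrt{\log 2N_{0}}$ scaling and the exponential tails. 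The remaining ingredients --- Cauchy--Schwarz, the singular-value bounds, sub-Gaussian concentration of quadratic forms, and maximal inequalities --- are routine and can be imported from \citet{sun2023using}.
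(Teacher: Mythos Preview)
Your sketch is correct and follows the same route as the paper, which in fact proves the theorem in one line by invoking Theorem~1 of \citet{sun2023using} directly, after observing that $\hat\gamma^{\dis}$ and $\hat\gamma^{\agg}$ are precisely the ``concatenated'' and ``average'' weights in that paper's multi-outcome framework. What you have written is essentially a faithful unpacking of that cited result --- the imbalance/overfitting split, the Cauchy--Schwarz plus singular-value control of the imbalance term, and the simplex-to-maximum reduction with a sub-Gaussian maximal inequality for the overfitting term --- so nothing is missing, but for the present paper the entire argument collapses to the identification of the two estimators with the two sets of weights in \citet{sun2023using}.
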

\begin{proof}
    Results follow directly from Theorem 1 of \cite{sun2023using} where $\hat{\gamma}^{\dis}$ correspond to the concatenated weights, and $\hat{\gamma}^{\agg}$ correspond to the average weights in their notation.
\end{proof}
\begin{lemma}\label{lem:imbalance}
   Suppose there exists $\nu^\ast \in [0,1]$ such that for $\hat\gamma^{\com} \in\arg\min_{\gamma\in \mathcal C} \nu^\ast q^{\dis}(\gamma) + (1-\nu^\ast) q^{\agg}(\gamma)$, we have $q^{\dis}(\gamma^{\com}) \leq q^{\dis}(\gamma^\ast) $ and $q^{\agg}(\gamma^{\com}) \leq q^{\agg}(\gamma^\ast) $ almost surely. For any $\delta>0$, with probability at least $1-8\exp\left(-\frac{\delta^{2}}{2}\right)-4\exp\left(-\frac{T_0K\delta^2}{2\sigma^2(1+C^2)}\right)$, the absolute bias satisfies the bound, 
 
\begin{align*}
\left| \text{Bias}(\hat{\gamma}^{\com}) \right| \leq \min & \left\{ \frac{rM^{2}}{\underline{\xi}^{\dis}} \left( 4(1+C)\sigma + 2\delta + \frac{\tilde{\sigma}}{\sqrt{T_{0}K}} \right),     \frac{rM^{2}}{\underline{\xi}^{\agg}} \left( \frac{4(1+C)\sigma}{\sqrt{K}} + 2\delta + \frac{\tilde{\sigma}}{\sqrt{T_{0}K}} \right) \right\}.  
\end{align*}  
\end{lemma}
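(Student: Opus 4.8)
The plan is to reduce Lemma 1 to Theorem 1 by showing that the bias bound for $\hat\gamma^{\com}$ is controlled by whichever of the disaggregated or aggregated bounds applies, using the hypothesis that $\hat\gamma^{\com}$ fits \emph{both} objectives at least as well as the oracle weights $\gamma^\ast$. The key observation is that the proof of Theorem 1 (via Theorem 1 of \cite{sun2023using}) does not actually use the fact that $\hat\gamma^{\dis}$ \emph{minimizes} $q^{\dis}$; it only uses that $q^{\dis}(\hat\gamma^{\dis}) \le q^{\dis}(\gamma^\ast)$, since the oracle weights are feasible and the estimated weights are minimizers. That inequality is exactly what bounds the ``imbalance'' term in the bias decomposition, and the ``overfitting'' term is bounded uniformly over $\gamma \in \mathcal C$ by a concentration argument that is agnostic to which particular $\gamma$ is plugged in.

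\textbf{Step 1.} Recall the bias decomposition from the main text: for any weights $\hat\gamma \in \mathcal C$,
\[
\text{Bias}(\hat\gamma) = \underbrace{\sum_{t,j}\omega_{tj}\Big(\dot Y_{1tj} - \sum_{W_i=0}\hat\gamma_i \dot Y_{itj}\Big)}_{\text{imbalance}} - \underbrace{\sum_{t,j}\omega_{tj}\Big(\dot\varepsilon_{1tj} - \sum_{W_i=0}\hat\gamma_i \dot\varepsilon_{itj}\Big)}_{\text{overfitting}},
\]
where the $\omega_{tj}$ depend on the factor structure and on the choice of objective (disaggregated vs.\ aggregated), not on the particular feasible $\gamma$. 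Inspecting the proof of Theorem 1, the disaggregated bound \eqref{eq:dis_bias} is obtained by (a) bounding the imbalance term in terms of $q^{\dis}(\hat\gamma^{\dis})$, then using $q^{\dis}(\hat\gamma^{\dis}) \le q^{\dis}(\gamma^\ast) $, with $\gamma^\ast$ itself controlled by the oracle unbiasedness property and the lower bound $\underline\xi^{\dis}$ on the factor second moments; and (b) bounding the overfitting term by a high-probability concentration inequality uniform over $\mathcal C$ (this produces the $8\exp(-\delta^2/2) + 4\exp(-T_0K\delta^2/(2\sigma^2(1+C^2)))$ failure probability and the $\tilde\sigma/\sqrt{T_0K}$ term). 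The same holds \emph{mutatis mutandis} for \eqref{eq:agg_bias} with $\underline\xi^{\agg}$ and the $1/\sqrt K$ factor.

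\textbf{Step 2.} Now apply this to $\hat\gamma^{\com}$. By hypothesis $q^{\dis}(\gamma^{\com}) \le q^{\dis}(\gamma^\ast)$ almost surely, so the imbalance-term bound in the proof of \eqref{eq:dis_bias} goes through verbatim with $\hat\gamma^{\dis}$ replaced by $\hat\gamma^{\com}$; the overfitting-term bound applies because $\hat\gamma^{\com} \in \mathcal C$ and the concentration bound is uniform over $\mathcal C$. Hence, on the same high-probability event, $|\text{Bias}(\hat\gamma^{\com})| \le \frac{rM^2}{\underline\xi^{\dis}}\big(4(1+C)\sigma + 2\delta + \tilde\sigma/\sqrt{T_0K}\big)$. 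Symmetrically, using $q^{\agg}(\gamma^{\com}) \le q^{\agg}(\gamma^\ast)$, on the (same) high-probability event $|\text{Bias}(\hat\gamma^{\com})| \le \frac{rM^2}{\underline\xi^{\agg}}\big(4(1+C)\sigma/\sqrt K + 2\delta + \tilde\sigma/\sqrt{T_0K}\big)$. Since both inequalities hold simultaneously on an event of probability at least $1-8\exp(-\delta^2/2) - 4\exp(-T_0K\delta^2/(2\sigma^2(1+C^2)))$ — here I would check that the underlying concentration events for the two objectives can be taken to be the \emph{same} event, or else union-bound and absorb the constant, noting the stated failure probability already has slack — the bias is bounded by the minimum of the two, which is the claim.

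\textbf{The main obstacle} I anticipate is the bookkeeping in Step 1–Step 2: making precise that the proof of Theorem 1 genuinely only invokes the minimizer property through the single inequality $q(\hat\gamma) \le q(\gamma^\ast)$, and that the concentration / high-probability events used for the disaggregated and aggregated analyses either coincide or can be intersected without degrading the stated probability. This requires unpacking Theorem 1 of \cite{sun2023using} carefully rather than citing it as a black box — in particular confirming that the ``overfitting'' bound there is stated uniformly over the weight set $\mathcal C$ and does not secretly depend on optimality. Everything else (the two feasibility-based inequalities, taking a minimum) is immediate.
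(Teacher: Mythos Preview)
Your proposal is correct and follows essentially the same approach as the paper: the paper's proof simply observes that the hypotheses $q^{\dis}(\gamma^{\com}) \le q^{\dis}(\gamma^\ast)$ and $q^{\agg}(\gamma^{\com}) \le q^{\agg}(\gamma^\ast)$ are exactly what Theorem~\ref{thm:error_bounds} requires, so both bounds apply to $\hat\gamma^{\com}$ and one may take the minimum. Your Step~1/Step~2 unpacking of why only the inequality (not full optimality) is needed, and your remark about the high-probability events coinciding, are more explicit than the paper's two-sentence argument but amount to the same reasoning.
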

\begin{proof}
    Since  $q^{\dis}(\gamma^{\com}) \leq q^{\dis}(\gamma^\ast) $ and $q^{\agg}(\gamma^{\com}) \leq q^{\agg}(\gamma^\ast) $ almost surely, either of the two bias bounds  stated in Theorem~\ref{thm:error_bounds} is a valid upper bound for the estimate based on the combined weights 
 $\hat\gamma^{\com}$. We may therefore take the minimum of the two bounds to bound $|Bias(\hat\gamma^{\com})|$.
\end{proof}

\begin{figure}[!ht]
\begin{center}
\includegraphics[width=0.5\textwidth]{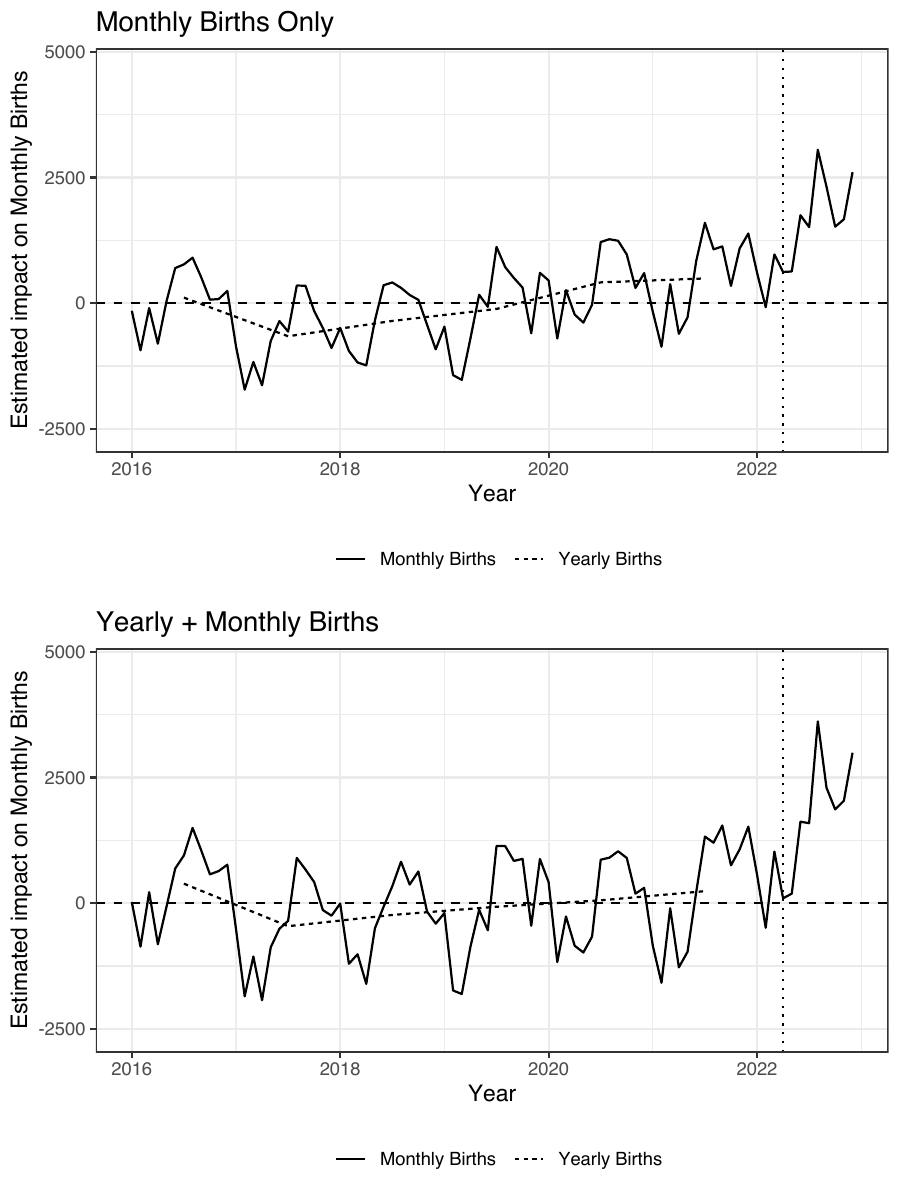}
\caption{Balance only in monthly vs in both monthly and yearly}
\label{fig:estimates}
\end{center}
\end{figure}

\begin{figure}[!ht]
\begin{center}
\includegraphics[width=0.5\textwidth]{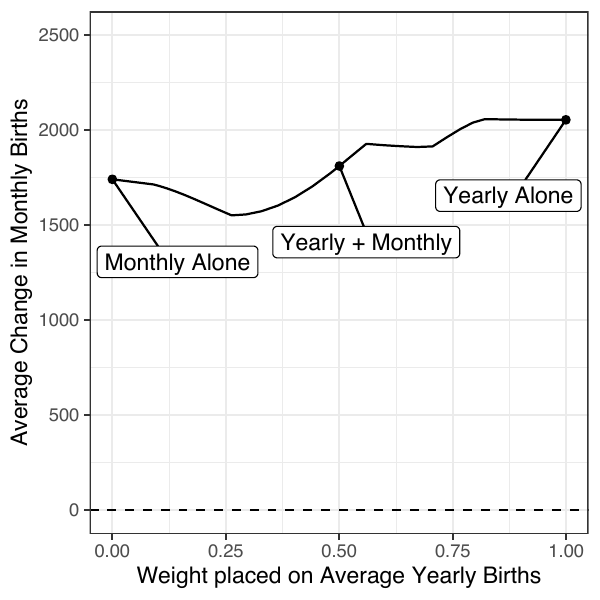}
\caption{Change in average estimated impact on monthly births as the relative weight on monthly vs yearly births varies from 0 to 1.}\label{fig:estimates_vary}
\end{center}
\end{figure} 
 
\end{document}